\newcommand{\xvec}{{\bf{x}}}
\newcommand{\svec}{{\bf{s}}}
\newcommand{\Dmat}{{\bf{D}}}
\newcommand{\define}{\stackrel{\triangle}{=}}
\def\thetavec{{\mbox{\boldmath $\theta$}}}
\def\thetavecsmall{{\mbox{\boldmath {\scriptsize $\theta$}}}}
\newcommand{\be}{\begin{equation}}
\newcommand{\ee}{\end{equation}}
\newcommand{\beqna}{\begin{eqnarray}}
\newcommand{\eeqna}{\end{eqnarray}}
\newtheorem{theorem}{Theorem}
\newtheorem{definition}{Definition}
\newcommand{\ud}{\,\mathrm{d}}
\begin{document}
\title{Low-Complexity Detection of Small Frequency Changes
by the Generalized LMPU Test}
\author{Eyal Levy and Tirza Routtenberg$^*$,\\
School of Electrical and
Computer Engineering,\\
Ben-Gurion University of the Negev, Beer-Sheva
84105, Israel
	\thanks{ *Corresponding author. E-mail address: tirzar@bgu.ac.il.}
}

\maketitle
\vspace{-1.5cm}
\begin{abstract}
In this paper, we consider the detection of a small change in the frequency
of  sinusoidal signals, which arises in various signal processing applications.
The  generalized likelihood ratio test (GLRT) for this problem
uses the maximum likelihood (ML) estimator of the frequency, and therefore suffers from high computational complexity.
In addition, the GLRT is not necessarily optimal and its performance may degrade for non-asymptotic scenarios that are characterized by close hypotheses and small sample sizes.
 	In this paper we propose a new detection method,
 named the generalized locally most powerful unbiased (GLMPU) test, which is
  a general method for local detection  in the presence of nuisance parameters.
  A closed-form expression of the GLMPU test is developed for the detection of frequency deviation
  in the case where the complex amplitudes of the measured signals are unknown.
Numerical simulations show improved performance over
the GLRT in terms of probability of detection performance and computational complexity. 
\end{abstract}
\vspace{-0.75cm}
\begin{IEEEkeywords}
Locally most powerful unbiased test,
	nuisance parameters,
		low-complexity detection methods,
 frequency deviation
\end{IEEEkeywords}
\vspace{-0.75cm}
\section{Introduction}
\label{sec:intro}
The problem of  detection of small frequency deviations in sinusoidal signals with unknown complex amplitudes arises in many applications such as sonar, communications,  and power systems \cite{robby}. For example, in power systems small frequency changes can be a precursor to various faults and contingencies  \cite{attack,smartGrid,smartGrid2}.
Another example is in atomic clocks, which are  widely employed in current electronic systems for guaranteeing accurate synchronization and/or high stability of the time reference. One of the most important types of faults that may affect the atomic clock behavior is the frequency jump \cite{barto,atomc}.  
 Thus, the ability to detect and track varying frequency deviation is highly desirable for attaining robustness and sustainability in various systems.

The detection of frequency deviation with unknown amplitudes is a special case of 
composite hypothesis testing, in which the likelihood depends on unknown parameters.
For the general case of composite hypothesis testing,  the  uniformly most powerful (UMP) test does not usually exist \cite{Ali}.
Instead,
the generalized likelihood ratio test (GLRT) is  widely used  due to its ease of implementation and its asymptotic properties
  \cite{Lehmann,Kay_detection}.
However, the GLRT uses the maximum likelihood (ML) estimator, and therefore suffers from high complexity in nonlinear models.
Moreover, the GLRT is not optimal in the Neyman–Pearson sense \cite{6376186} and its  performance may degrade for non-asymptotic scenarios that are characterized by close hypotheses, small sample sizes, or mismatched models {\cite{Kim_Hero_2000,6376186,Friedmann_Fishler_Messer_2002,Liu_Nehorai_2005,zeitouni1992generalized}}.
Thus, new methods are required with low complexity and satisfactory  non-asymptotic performance.
	In general, for two-sided hypothesis testing, some restrictions have to be added to the locally most powerful (LMP) approach in order to obtain a valid test.
Two widely-used restrictions are: 1)  invariance tests, which results in the 
	LMP invariant test \cite{6376186,lmpid}; and 2)  unbiasedness  in the sense of tests where the probability of detection is greater than (or equal to) the probability of false alarm \cite{Lehmann},  which results in the
	LMP unbiased (LMPU) test \cite{LPMUbook}.
	In this paper,  the second approach has been adopted in order to obtain a valid test  for composite two-sided hypothesis testing.

In this paper, we consider the problem of the detection of a frequency deviation from a nominal value for sinusoidal signals with unknown complex amplitudes.  
Since the hypotheses are assumed to be close, 
our goal is  to develop a low-complexity detector for the detection of small  changes in the frequency.
To this end, we present  the theoretical concept of the  {\em{generalized locally most powerful unbiased}} (GLMPU) detector. The GLMPU detector provides a general local detection approach in the presence of unknown nuisance  parameters. Similar to the concept of the GLRT \cite{Poor}, the GLMPU  detector is obtained by  substituting the ML estimators of the nuisance parameters into the LMPU  test. When the estimation error of the nuisance parameters is small, the GLMPU test is expected
to be close to the LMPU test. The GLMPU test is the two-sided version of the 
	GLMP for one-sided local hypothesis testing  in the presence of unknown nuisance parameters described in   \cite{marohn2001comment,pmutir}.
	We derive closed-form expressions of the LMPU and GLMPU
tests for  local frequency deviation detection.
We provide simulation results in practical
settings and demonstrate  that the proposed GLMPU and LMPU tests outperform the GLRT methods for a small false alarm probability.
Furthermore, the computational complexity of the LMPU and GLMPU tests is lower than that of the GLRT, since it does require estimation of the frequency.    


\section{Problem formulation: Detection of small frequency deviations}
\label{Model_Definition}
	We consider a binary hypothesis testing problem in which $M$ sensors collaborate to detect the presence of a frequency deviation in a sinusoidal signal. This  hypothesis
	testing problem is formulated as follows: 
	\begin{align}\label{eq1}
	\left\{\begin{array}{l}
	\mathcal{H}_{0}:~x_m[n]= 
	A_{m}e^{j\gamma n} + w_m[n] \\
	\mathcal{H}_{1}:~x_m[n]= 
	A_{m}e^{j\gamma \frac{\omega_0+\Delta}{\omega_0} n} + w_m[n]\end{array}\right. ,~ 
	 n=0,\ldots,N-1 ~{\text{and}} ~  m=1,\ldots,M,
	\end{align}
where $x_m[n]$ is the observation at time $n$ measured by the $m$th sensor, $\gamma $ represents the sampling angle, and the sequence $\{w_m[n]\}^{N-1}_{n=0}$ is an independent complex circularly symmetric zero-mean Gaussian noise sequence with known variances $\sigma_m^2$,  $m=1,\dots,M$.
The $m$th complex amplitude  is denoted by $A_{m}\in{\mathbb{C}}$, $ m=1,\ldots,M$, $\omega_{0}$ is the {\em{known}} nominal system frequency under normal conditions, and $\Delta$ is the {\em{unknown}} frequency deviation.
That is,  the observations are sampled versions of a sinusoidal signal  sampled $\frac{2\pi}{ \gamma}$ times per
cycle of the nominal frequency, $\omega_0$, where the signal frequency is $\omega_0$ and $\omega_0+\Delta$ under hypotheses
$\mathcal{H}_{0}$ and $\mathcal{H}_{1}$, respectively.
The goal is to detect whether there is a nonzero frequency deviation, $\Delta$, which is  assumed to be small,
based on the $M$  observation vectors, $\xvec_m\define [x_m[0],\ldots,x_m[N-1]]^T$, $m=1,\ldots,M$, in the presence of unknown nuisance parameters,
	$A_1,\ldots,A_{M}$.  
The model in \eqref{eq1}

For this case, the likelihood ratio test (LRT), which assumes the knowledge of the unknown parameters $A_1,\ldots,A_{M}$, and $\Delta$, is given by (see, e.g. Subsection 3.2 in \cite{wiley}):
\beqna\label{eq4}
T_{\text{LRT}}(\xvec) =
\sum_{m=1}^{M}\frac{{\text{Re}}\left\{A_{m}\xvec_{m}^{H}
\svec(\omega_{0}+\Delta)\right\}}{\sigma_m^{2}} 
-\sum_{m=1}^{M}\frac{{\text{Re}}\left\{A_{m}\xvec_{m}^{H}
 \svec(\omega_{0}) \right\}}{\sigma_m^{2}},
\eeqna
where $\xvec\define[\xvec_1^T,\ldots,\xvec_M^T]^T$,  
 ${\text{Re}}\{\cdot\}$ and  ${\text{IM}}\{\cdot\}$ denote the real and imaginary parts of its argument, and 
\be \label{eq1_a}
\svec(\omega) \define  \left[ 1, e^{j\gamma \frac{\omega}{\omega_0}  }, e^{j\gamma \frac{\omega}{\omega_0} 2} , \dots , e^{j\gamma \frac{\omega}{\omega_0} (N-1)} \right]^{T},~\forall\omega\in{\mathbb{R}}.  
\ee
The left and right terms on the r.h.s. of \eqref{eq4} are associated with the log-likelihood functions under hypothesis 
	$\mathcal{H}_1$ and $\mathcal{H}_0$, respectively.
	
The GLRT
replaces the unknown parameters in the two likelihoods in the  LRT  with the  associated ML estimator of the unknown parameters under each hypothesis.
In particular, the ML estimator of the frequency deviation, $\Delta$, under hypothesis $\mathcal{H}_1$ is given by \cite{singletone}:
\begin{align} \label{eq9}
\hat{\Delta} = \arg\max_{\alpha \in [-\frac{\omega_0\pi}{\gamma},\frac{\omega_0\pi}{\gamma})} \frac{1}{N}\sum_{m=1}^{M} \frac{|\svec^{H}(\omega_{0}+\alpha)\xvec_{m}|^2}{\sigma_m^2},
\end{align}
where we restrict the estimates of $\Delta$ to be in 
$[-\frac{\omega_0\pi}{\gamma},\frac{\omega_0\pi}{\gamma}) $, in order to avoid ambiguities. 
The ML estimators of $A_{m}$ under hypotheses $\mathcal{H}_1$ and $\mathcal{H}_0$  are given by
$
\hat{A}_{m}^{(1)} 
= \frac{1}{N}\svec^{H}(\omega_{0}+\hat{\Delta}) \xvec_{m}$
and
$
\hat{A}_{m}^{(0)} = \frac{1}{N}\svec^{H}(\omega_{0}) \xvec_{m}$,  for $m=1,\ldots,M$,
respectively.
By substituting (\ref{eq9})   and $\hat{A}_{m}^{(1)}$ into the left  term on the r.h.s. of (\ref{eq4}), 
and substituting $\hat{A}_{m}^{(0)}$ in the right term, we obtain the the GLRT  for the considered problem:
\beqna
\label{GLRT}
T_{\text{GLRT}}(\xvec) 
 =\frac{1}{N}
\sum_{m=1}^{M}\frac{|\xvec_{m}^{H} \svec(\omega_{0}+\hat{\Delta})|^2-
|\xvec_{m}^{H} \svec(\omega_{0})|^2}{\sigma_m^{2}}  .
\eeqna
	
	The GLRT in \eqref{GLRT} has two  fundamental drawbacks: 1) it requires the computation of the ML estimator of the frequency deviation, $\Delta$, from  (\ref{eq9}), which is based on a  search approach and, thus, suffers from high computational complexity and long runtime for real-time applications (see, e.g. \cite{nielsen,singletone}); and 
	2)  when $\Delta$ has small values, the alternative is close to the null hypothesis
	and the GLRT performance may degrade and be outperformed by local detectors.  
	As an alternative to the GLRT, in the following we construct a new concept of the GLMPU test, which has merit in terms of low computational complexity   and  high detection performance, especially for small deviations of the detected parameter.


\section{GLMPU test for local detection with nuisance parameters}
\label{GLMP_sec}
In many cases,  the optimal  UMP test   does not exist. For these cases, the LMP test yields the maximum probability of detection  for weak  signals that are near the local value   \cite{Poor,Kay_detection,Lehmann}. However, the LMP and LMPU tests do not involve  unknown parameters, except for the local parameter. In this section, we propose the  GLMPU test, which is a generalization of the  LMPU test for two-sided hypothesis testing regarding a local parameter (i.e. detection of weak signals) with additional, unknown nuisance parameters. 
We  develop the GLMPU test  for  general  local detection in the presence of  nuisance parameters in Subsection \ref{general_GLMP}. Then,  we derive the LMPU and the GLMPU tests for the special case of the detection of frequency deviation in Subsection \ref{special_case}.

\subsection{GLMPU test}
\label{general_GLMP}
 We consider the following general two-sided composite hypothesis testing:
    \begin{align}
    \label{eqgen}
   	\left\{\begin{array}{l}
    \mathcal{H}_{0}: \xvec \sim f(\xvec; \theta_l, \thetavec_{n}),~\theta_l=\theta_0  \\
    \mathcal{H}_{1}:  \xvec \sim f(\xvec; \theta_l, \thetavec_{n}), ~ \theta_l \ne \theta_{0}
    \end{array}\right.,
    \end{align}
    where,
    with slight abuse of notation, $\xvec\in \Omega_\xvec$ in this subsection is the observation vector for the general case, where $\Omega_\xvec$ is the observation space. 
    The pdf of  $\xvec$ under both hypotheses,
    $f(\xvec; \theta_l, \thetavec_{n})$,
    is assumed to be a continuous and twice differentiable function with respect to (w.r.t.) the local parameter, $\theta_l\in{\mathbb{R}}$,
    for any  unknown nuisance parameter vector,  
    $\thetavec_n \in \mathbb{C}^K$.  
Our goal here is to implement the LMPU test for the model in (\ref{eqgen}),
by maximizing the probability of detection for a given probability of false alarm for small deviations from the null hypothesis  around a boundary value of a \emph{local} parameter, $\theta_l=\theta_0$, i.e. in the local, open neighborhood,  
    \begin{align}\label{eqOmegaH}
        \Omega_\delta\define \{\theta_l\in{\mathbb{R}}, |\theta_l - \theta_{0}|<\delta\}.
    \end{align}
  Since the hypothesis testing in \eqref{eqgen} includes two-sided alternatives, we add the unbiasedness as an extra condition for the development of the LMP test.

 A general non-random test, based on the observation vector, $\xvec$, can be defined as
    \begin{align}
    \label{general_test}
    \Phi(\xvec) \define 
    	\left\{\begin{array}{lr}
    1, & \xvec \in \mathcal{S}_{1} \\
    0,& \xvec \notin \mathcal{S}_{1}
    \end{array}\right.,
    \end{align}
    where $\mathcal{S}_{1}\subset \Omega_\xvec$  is  
    the rejection region, which includes values of the test statistic, $\Phi(\xvec)$, that lead to rejection of 
    $\mathcal{H}_{0}$ (acceptance of  $ \mathcal{H}_{1}$). 
    Thus,  the probability of detection and the probability of the false alarm 
    for  the hypothesis testing in \eqref{eqgen} by using the  general test in \eqref{general_test}
    are
    \beqna\label{eqPro}
    P_{D}(\theta_l,\thetavec_{n})=   {\rm{E}}_{\theta_l}  [\Phi(\xvec)]=\int_{\Omega_\xvec} \Phi(\xvec) f(\xvec; \theta_l,\thetavec_n) \ud \xvec ~ ,  \theta_l \ne \theta_0
    \eeqna
    and
    \be
    \label{PFA_def}
    P_{FA} (\theta_0, \thetavec_n)={\rm{E}}_{\theta_l}  [\Phi(\xvec)]=\int_{\Omega_\xvec} \Phi(\xvec) f(\xvec; \theta_l,\thetavec_n) \ud \xvec  , ~ \theta_l = \theta_0,
    \ee
    respectively.
    
    The requirements on the desired test are:
    \begin{enumerate}
        \item Size $\alpha$ - By using \eqref{PFA_def}, this requirement can
        be written as
    \be
    \label{Pfa}
   P_{FA}(\theta_{0},\thetavec_{n})=\alpha,
   \ee
   where $\alpha\in[0,1]$.
    \item Locally unbiasedness - 
 A  test $\Phi$  of size $\alpha$ is  a locally unbiased test for the hypothesis testing in \eqref{eqgen} if
   \eqref{Pfa} is satisfied  and, in addition, 
    \be
   \label{unbiased}
   P_{D}(\theta_l,\thetavec_{n}) \geq \alpha,~\forall \theta_l\in \Omega_\delta,
   \ee
    where   the set  $\Omega_\delta$ and $ P_{D}(\theta_l,\thetavec_{n}) $  are defined in \eqref{eqOmegaH} and
    \eqref{eqPro}, respectively.
    \end{enumerate}

The LMPU test is the test that maximizes the probability of detection under the constraints in \eqref{Pfa} and \eqref{unbiased}. The following theorem presents the explicit test as a function of the likelihood function.
    \begin{theorem}
    \label{Th_LMPU} (LMPU test)
    The LMPU test
    for a known parameter vector, $\thetavec_n$,   is given by:
   \beqna \label{eqlmpu}
    T_{\text{LMPU}}(\xvec) =
    \frac{\partial^2 \log f(\xvec; \theta_l, {\thetavec}_{n})}{\partial \theta_l^2}|_{\theta_l=\theta_{0}}  +
    \left( \frac{\partial \log f(\xvec; \theta_l, {\thetavec}_{n})}{\partial \theta_l}|_{\theta_l=\theta_{0}}\right)^2 -
    \tilde{\kappa}_{1} -\tilde{\kappa}_{2} \frac{\partial \log f(\xvec; \theta_l, {\thetavec}_{n})}{\partial \theta_l}|_{\theta_l=\theta_{0}},
    \eeqna
    where $\tilde{\kappa}_{1}$ and $\tilde{\kappa}_{2} $ are determined by the 
    $\alpha$-size 
    and  unbiasedness constraints from  \eqref{Pfa} and \eqref{unbiased}, respectively.
    \end{theorem}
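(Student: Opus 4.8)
The plan is to recast the construction of the LMPU test as a constrained linear optimization over randomized test functions and then solve it with a generalized Neyman--Pearson argument. Throughout, $\thetavec_n$ is held fixed (it is assumed known in the statement), and we assume the regularity that permits differentiating $P_{D}(\theta_l,\thetavec_n)$ twice under the integral sign in \eqref{eqPro}, i.e.\ $\frac{\partial}{\partial\theta_l}\int_{\Omega_\xvec}\Phi(\xvec)f\,\ud\xvec=\int_{\Omega_\xvec}\Phi(\xvec)\frac{\partial f}{\partial\theta_l}\,\ud\xvec$, and similarly for the second derivative.

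First I would reformulate the two requirements of Section \ref{general_GLMP} as linear constraints on the test function $\Phi$ in \eqref{general_test}. The size-$\alpha$ condition \eqref{Pfa} is $\int_{\Omega_\xvec}\Phi(\xvec)f(\xvec;\theta_0,\thetavec_n)\,\ud\xvec=\alpha$. For the unbiasedness condition \eqref{unbiased}, note that $P_{D}(\theta_0,\thetavec_n)=\alpha$ by continuity, so \eqref{unbiased} states that $\theta_l\mapsto P_{D}(\theta_l,\thetavec_n)$ attains a local minimum at the interior point $\theta_l=\theta_0$ of $\Omega_\delta$. A twice-differentiable function with an interior local minimum has vanishing first derivative there, so
\[
\frac{\partial P_{D}}{\partial\theta_l}\Big|_{\theta_l=\theta_0}=\int_{\Omega_\xvec}\Phi(\xvec)\,\frac{\partial f(\xvec;\theta_l,\thetavec_n)}{\partial\theta_l}\Big|_{\theta_l=\theta_0}\,\ud\xvec=0,
\]
which is a second linear constraint on $\Phi$. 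Conversely, once the first derivative vanishes, a Taylor expansion gives $P_{D}(\theta_l,\thetavec_n)=\alpha+\tfrac12\frac{\partial^2 P_{D}}{\partial\theta_l^2}|_{\theta_0}(\theta_l-\theta_0)^2+o((\theta_l-\theta_0)^2)$, so on a sufficiently small $\Omega_\delta$ the inequality \eqref{unbiased} is equivalent to $\frac{\partial^2 P_{D}}{\partial\theta_l^2}|_{\theta_0}\ge 0$, and ``locally most powerful'' means: among all $\Phi$ meeting the two linear constraints, maximize the local curvature $\frac{\partial^2 P_{D}}{\partial\theta_l^2}|_{\theta_0}=\int_{\Omega_\xvec}\Phi(\xvec)\frac{\partial^2 f}{\partial\theta_l^2}|_{\theta_0}\,\ud\xvec$.

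Next I would solve this program: maximize $\int_{\Omega_\xvec}\Phi\,\frac{\partial^2 f}{\partial\theta_l^2}|_{\theta_0}\,\ud\xvec$ over measurable $\Phi:\Omega_\xvec\to[0,1]$ subject to $\int\Phi f(\cdot;\theta_0,\thetavec_n)\,\ud\xvec=\alpha$ and $\int\Phi\frac{\partial f}{\partial\theta_l}|_{\theta_0}\,\ud\xvec=0$. This is exactly the generalized Neyman--Pearson lemma with two side constraints, whose solution is $\Phi(\xvec)=1$ when $\frac{\partial^2 f}{\partial\theta_l^2}|_{\theta_0}>\tilde{\kappa}_1 f(\xvec;\theta_0,\thetavec_n)+\tilde{\kappa}_2\frac{\partial f}{\partial\theta_l}|_{\theta_0}$ and $\Phi(\xvec)=0$ when the inequality is reversed (with arbitrary randomization on the equality set), the multipliers $\tilde{\kappa}_1,\tilde{\kappa}_2$ being fixed by the two constraints. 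Equivalently, the rejection region $\mathcal{S}_1$ is the set where $\frac{\partial^2 f}{\partial\theta_l^2}|_{\theta_0}-\tilde{\kappa}_1 f-\tilde{\kappa}_2\frac{\partial f}{\partial\theta_l}|_{\theta_0}>0$. Dividing this defining statistic by $f(\xvec;\theta_0,\thetavec_n)>0$ (which leaves $\mathcal{S}_1$ unchanged) and then substituting the elementary identities $\frac1f\frac{\partial f}{\partial\theta_l}=\frac{\partial\log f}{\partial\theta_l}$ and $\frac1f\frac{\partial^2 f}{\partial\theta_l^2}=\frac{\partial^2\log f}{\partial\theta_l^2}+\big(\frac{\partial\log f}{\partial\theta_l}\big)^2$ (all evaluated at $\theta_l=\theta_0$) yields precisely $T_{\text{LMPU}}(\xvec)$ in \eqref{eqlmpu}.

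The main obstacle, and the point I would be most careful about, is the generalized Neyman--Pearson step together with feasibility of the multipliers: one must argue that a pair $(\tilde{\kappa}_1,\tilde{\kappa}_2)$ exists for which both the size constraint and the zero-first-derivative constraint hold, and that the resulting optimum indeed satisfies local unbiasedness. The latter is handled by observing that the trivial test $\Phi\equiv\alpha$ is feasible (it has size $\alpha$ and $\frac{\partial P_{D}}{\partial\theta_l}|_{\theta_0}=0$) with $\frac{\partial^2 P_{D}}{\partial\theta_l^2}|_{\theta_0}=0$, so the optimal curvature is necessarily $\ge 0$ and \eqref{unbiased} holds on a small enough $\Omega_\delta$; the former is the standard Lagrange-multiplier existence argument underlying the generalized Neyman--Pearson lemma, which also needs the mild dominated-convergence-type conditions that justify differentiating under the integral sign in \eqref{eqPro}.
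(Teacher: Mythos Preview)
Your proposal is correct and follows essentially the same route as the paper's proof: reduce the size and local-unbiasedness requirements to two linear constraints (level $\alpha$ and vanishing first derivative of the power at $\theta_0$), use a Taylor expansion to identify the local objective with the second derivative of the power, apply the generalized Neyman--Pearson lemma with two side constraints, and then divide by $f(\xvec;\theta_0,\thetavec_n)$ and use the identities $\frac{1}{f}\partial_{\theta_l} f=\partial_{\theta_l}\log f$ and $\frac{1}{f}\partial_{\theta_l}^2 f=\partial_{\theta_l}^2\log f+(\partial_{\theta_l}\log f)^2$ to obtain \eqref{eqlmpu}. If anything, your treatment is slightly more careful about feasibility of the multipliers (via the trivial test $\Phi\equiv\alpha$) than the paper's appendix.
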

\begin{proof}
The proof is along the path of the development of the LMPU test (see, e.g. pp. 369-370 in \cite{LPMUbook}). For the sake of completeness  it appears in the Appendix.
 \end{proof}
 It can be shown that, under  mild conditions \cite{raul,ghosh}, the coefficients $\tilde{\kappa}_{1}$ and $\tilde{\kappa}_{2}$ in Theorem \ref{Th_LMPU}  exist and can be determined uniquely from the constraints described in Theorem \ref{Th_LMPU}.
In  particular, in Subsection 3.6 in \cite{Lehmann} it is shown that the coefficients $\tilde{\kappa}_{1},\tilde{\kappa}_{2} $ should be non-negative in order to satisfy \eqref{unbiased} and  \eqref{Pfa}.
 In the general case, these coefficients  are functions of the problem parameters, $\theta_l$ and $\thetavec_n$.   
    
    Similar to the GLRT concept  \cite{Poor,Kay_detection},
    we propose in this paper
    the novel GLMPU  test, which is derived by replacing nuisance parameters in the LMPU test statistic from Theorem \ref{Th_LMPU} by their corresponding ML estimators calculated at the point $\theta_l = \theta_0$, as described in the following definition.
\begin{definition}
\label{GLMPdef} (GLMPU test)
The GLMPU test for the hypothesis testing problem in \eqref{eqgen} with unknown nuisance parameters, $\thetavec_n$, is obtained by substituting  the following ML estimator of $\thetavec_n$:
	  \begin{align}\label{option1}
    \hat{\thetavec}_{n} = \arg\max_{\thetavecsmall_{n}\in{\mathbb{C}}^M} \log f(\xvec; \theta_l,\thetavec_{n} )|_{\theta_l = \theta_{0}},
    \end{align}
	 into the LMPU test in \eqref{eqlmpu}.
	 Thus, the GLMPU test is given by: 
	   \beqna \label{eqGlmpu}
        T_{\text{GLMPU}}(\xvec)	
        =\frac{\partial^2 \log f(\xvec; \theta_l, \hat{\thetavec}_{n})}{\partial \theta_l^2}|_{\theta_l=\theta_{0}} + 
    \left( \frac{\partial \log f(\xvec; \theta_l, \hat{\thetavec}_{n})}{\partial \theta_l}|_{\theta_l=\theta_{0}}\right)^2 - \kappa_{1} - 
     \kappa_{2} \frac{\partial \log f(\xvec; \theta_l, \hat{\thetavec}_{n})}{\partial \theta_l}|_{\theta_l=\theta_{0}},
    \eeqna
    where  $\kappa_{1}\geq 0$ and $\kappa_{2}\geq 0$ are determined from the following conditions for the chosen $\alpha\in[0,1]$:
    \begin{align}\label{eqcondhat1}
   &P_{FA}(\theta_{0},\hat{\thetavec}_n)=\alpha, \\
   \label{eqcondhat2}
   &  P_{D}(\theta_l,\hat{\thetavec}_n) \geq \alpha,~\forall \theta_l\in \Omega_\delta.
    \end{align}
    \end{definition}
  It should be noted that the conditions  in \eqref{eqcondhat1} and \eqref{eqcondhat2} are obtained by substituting the estimator $\hat{\thetavec}_n$ from \eqref{option1} instead of $\thetavec_n$  in conditions \eqref{Pfa} and \eqref{unbiased}. 
   In the general case, the coefficients ${\kappa}_{1}$ and ${\kappa}_{2}$ in Definition \ref{GLMPdef} are functions of  $\theta_l$ and of the estimator, $\hat{\thetavec}_n$. 
   In practice,  the calibration of the coefficients $\kappa_{1}$ and $\kappa_{2}$  can be performed  offline by a series of experiments for typical values and under some assumptions.

  In terms of computational complexity, the  advantage of the proposed GLMPU test is evident when the nuisance parameters are easy to estimate by the ML estimator, while the ML estimator of the local parameter requires a search approach.
  Similar to the derivation of the LMPU test in the Appendix, the GLMPU test can be obtained as a solution to a similar optimization problem as \eqref{OP} in the Appendix, by replacing the unknown vector $\thetavec_n$ with its ML estimator, $\hat{\thetavec}_n$, from \eqref{option1}.
  
 In the rare cases where the LMPU test is
independent of nuisance parameter vector, $\thetavec_n$, the GLMPU test from Definition \ref{GLMPdef} coincides with the LMPU test from Theorem \ref{Th_LMPU}. For the general case, 	since the ML estimator asymptotically converges to the true value of the estimated parameters,
the proposed GLMPU test is expected to achieve the performance of the LMPU test asymptotically, i.e. for a sufficient number of measurements and/or for a high signal-to-noise ratio (SNR).
 Another important special case is that of one-sided hypothesis testing, where the neighborhood in \eqref{eqOmegaH}
	is replaced by
	 $
        \Omega_\delta^{os}\define \{\theta_l\in{\mathbb{R}}, 0\leq \theta_l - \theta_{0}<\delta\}$.
	For this case, the derivation of the GLMPU test for the neighborhood $\Omega_\delta^{os}$ will result in  the recent one-sided GLMP
	\cite{pmutir}:
	\begin{align} \label{eqGLMP1sided}
          T_{GLMP}(\xvec) = \frac{\partial}{\partial \theta_l} \log f(\xvec,\theta_l,\hat{\thetavec}_n)|_{\theta_l=\theta_0},
    \end{align}
which can be interpreted as the rightmost term on the GLMPU in \eqref{eqGlmpu}.

\subsection{LMPU and GLMPU tests for frequency deviation detection}
\label{special_case}
In this subsection 
we  develop the LMPU test from Theorem \ref{Th_LMPU} and the GLMPU test from Definition \ref{GLMPdef}  for the special case of the detection of frequency deviation of sinusoidal signals with unknown complex amplitudes,  described  in Section \ref{Model_Definition}. 
In this case, the nuisance  parameter vector is  $\thetavec_{n} = [A_1,\ldots,A_{M} ]^{T}\in{\mathbb{C}}^M$ and the local parameter is $\theta_l=\Delta$ with the associated 
  boundary value of  $\theta_0=0$.  Thus, the  open neighborhood  
   from \eqref{eqOmegaH} in this case is $\Omega_\delta= \{\Delta\in{\mathbb{R}}, |\Delta |<\delta\}$.

The log-likelihood function for this model (after removing constant terms w.r.t. $\Delta$ ) is given by:
\beqna
\label{like_freq}
\log f(\xvec;\Delta,{\thetavec}_n)= -\sum_{m=1}^{M} \frac{1}{\sigma_{m}^{2}}||\xvec_m -A_m \svec(\omega_0 + \Delta) ||^{2} ,
\eeqna
where $\svec(\omega)$ is defined in \eqref{eq1_a}. In addition,   $\Delta=0$  under the null hypothesis, and  $\Delta \ne 0$ under the alternative.
The first- and second-order derivatives of the log-likelihood function in \eqref{like_freq} w.r.t. the local parameter, $\theta_l=\Delta$, are
	\begin{align} \label{lmpdef}
	\frac{\partial \log f(\xvec;\Delta,\thetavec_{n})}{\partial \Delta}|_{\Delta=0} =-
	\sum_{m=1}^{M}\frac{\gamma}{\omega_0\sigma_m^{2}} {\text{Im}}\left\{A_{m}\xvec_{m}^{H}  \Dmat_N \svec(\omega_{0}) \right\}
	\end{align}
	and
	\begin{align} \label{lmpdef1}
	\frac{\partial^2 \log f(\xvec;\Delta,\thetavec_{n})}{\partial \Delta^2}&|_{\Delta=0}=-\sum_{m=1}^{M}\frac{\gamma^2}{\omega_0^2\sigma_m^{2}}{\text{Re}}\left\{A_{m}\xvec_{m}^{H} \Dmat_N\Dmat_N \svec(\omega_{0}) \right\},
	\end{align} 
respectively, where  $\Dmat_N$ is a diagonal matrix with the diagonal elements  $[\Dmat_N]_{n} =n-1$, $\forall n=1,\ldots,N$.  By substituting \eqref{lmpdef} and \eqref{lmpdef1} in 	 
the LMPU test  in (\ref{eqlmpu}), one obtains 
	\begin{align}\label{eqLMPU}\nonumber
	    T_{LMPU}(\xvec) =& -\sum_{m=1}^{M}\frac{\gamma^2}{\omega_0^2\sigma_m^{2}}{\text{Re}}
	    \left\{A_{m}\xvec_{m}^{H} \Dmat_N \Dmat_N \svec(\omega_{0}) \right\} + \left(  \sum_{m=1}^{M}\frac{\gamma}{\omega_0 \sigma_m^{2}}{\text{Im}}\left\{A_{m}\xvec_{m}^{H}  \Dmat_N \svec(\omega_{0}) \right\} \right)^2 \\
	    &- \tilde{\kappa}_{1} + \tilde{\kappa}_{2} \sum_{m=1}^{M}\frac{\gamma}{\omega_0 \sigma_m^{2}}{\text{Im}}\left\{A_{m}\xvec_{m}^{H}  \Dmat_N \svec(\omega_{0}) \right\}.
	\end{align}
The LMPU test in \eqref{eqLMPU}  is a function of
the unknown parameters, $A_1,\ldots,A_m$, and, thus, the LMPU test is useful only as a benchmark on the performance of practical estimators.

 In order to obtain the GLMPU test, the unknown nuisance parameter vector, $\thetavec_{n}$, is  replaced by its  ML estimator at the point $\Delta = 0$, as defined in (\ref{option1}). The ML estimator of $A_{m}$ (i.e. the 
$m$th component  of  $\hat{\thetavec}_{n}$) is given by:
\begin{align} \label{eqlmpX}
[\hat{\thetavec}_{n}]_{m}=\hat{A}_{m} = \frac{1}{N} \svec(\omega_{0})^{H}\xvec_{m},~ \forall m=1,\ldots,M.
\end{align}
It can be seen that for this case the ML estimator of the nuisance parameter vector in \eqref{eqlmpX} is a linear function of the observation vector, $\xvec$, and does not require a search approach.
By substituting \eqref{eqlmpX} in (\ref{eqLMPU}) and replacing $\tilde{\kappa}_i$ by $\kappa_i$, $i=1,2$, we obtain that
the GLMPU test  in our case is
\begin{align}\label{eqGLMPU}\nonumber
  T_{GLMPU}(\xvec)=  &-\sum_{m=1}^{M}\frac{\gamma^{2}}{N \omega_{0}^{2} \sigma_m^{2}} {\text{Re}}\left\{\xvec_{m}^{H}  \Dmat_N \Dmat_N \svec(\omega_{0})\svec^{H}(\omega_{0}) \xvec_{m} \right\} +\left(\sum_{m=1}^{M}\frac{\gamma}{N \omega_0 \sigma_m^{2}}\text{Im}\left\{\xvec_{m}^{H}  \Dmat_N\svec(\omega_{0})\svec^{H}(\omega_{0}) \xvec_{m} \right\} \right)^{2} \\
  & - \kappa_{1} +\kappa_{2} \sum_{m=1}^{M}\frac{\gamma}{N \omega_0 \sigma_m^{2}}\text{Im}\left\{\xvec_{m}^{H}  \Dmat_N\svec(\omega_{0})\svec^{H}(\omega_{0}) \xvec_{m} \right\}.
\end{align}

The computational complexity of the GLMPU test in \eqref{eqGLMPU} is lower than that of the GLRT in \eqref{GLRT}, since, as can be seen in \eqref{eqGLMPU}, the test is based on  quadratic transformations of the observation vector, $\xvec$, and does not require a search approach in order to estimate the frequency deviation, $\Delta$.
In particular, in order to calculate $\hat{\Delta}$ from \eqref{eq9},  we need to define the grid for searching over the parameter $\alpha \in [-\frac{\omega_0\pi}{\gamma},\frac{\omega_0\pi}{\gamma})$. Then, 
for each value in the grid $\alpha$, the computational complexity  for calculating the 
 vector multiplications $|\svec^{H}(\omega_0+ \alpha) \xvec_{m}|^{2}$, $\forall m=1,\ldots,M$  costs  $2N M$ flops (see  Appendix C in \cite{time}) in addition to the summation  over $M$, which requires $M-1$ flops. Thus, the total number of flops for the computation of the ML estimator is $N_{\alpha} (M(2N+1)-1)$,	
where $N_{\alpha}$ denotes the number of points in the chosen grid of $\alpha$.
When $N_{\alpha}$ is larger, one can achieve better estimation performance of $\Delta$, which results in better detection performance of the GLRT, but at the cost of increased computational complexity.
In addition to the search approach,  the GLRT  from \eqref{GLRT} requires multiplications and summation with  ${\cal{O}}(MN)$ additional flops.
In contrast to the GLRT, which has a computational complexity of the order of  ${\cal{O}}(N_{\alpha} MN+MN)$,	the computational complexity of the GLMPU test from \eqref{eqGLMPU}  is based on linear operators, and is   $ {\cal{O}}(MN)$.

\vspace{-0.25cm}
\section{Simulations}
	\label{simulations_sec}
In this section we  evaluate the performance of the proposed GLMPU test and compare it with the performance of the LRT, GLRT, and LMPU tests  in terms of detection performance and computational complexity. Our simulations are based on the important application of power systems, based on  a measurement model of Phasor Measurement Units (PMUs)   \cite{PMUbook,kcl}, where it is known that the frequency deviations in electrical networks  are  small (see, e.g.  Table 1 in \cite{power}). 
PMUs have been increasingly
deployed in wide area transmission networks and are able to
accurately measure voltage and current phasors at a high frequency   with synchronized
time stamps.
We consider a single PMU that can be represented by the model in \eqref{eq1}, where in this case  $\{ A_m\}_{m=1}^{M-1}$ represent the phasors of the   currents of $M-1$ transmission lines and $A_M$ is the voltage phasor at a specific node (the detailed model is described in    \cite{kcl}).
In the simulations below,
we set
$M=6$, $\thetavec_n =[
	1,1 e^{j\frac{\pi}{3}},\sqrt{3} e^{-j\frac{5\pi}{6}},1,1e^{j\pi},1
	]^T $, and the nominal-frequency  $\omega_0=2\pi\cdot 60 $ $[\frac{rad}{sec}]$.  The sampling rate is $N=48$ samples per cycle of the nominal power frequency. We assume equal SNRs, and define  $SNR \define \frac{| A_{m}|^2}{\sigma_m^2}=0 [dB]$, $\forall m=1,\ldots,M$, unless otherwise specified.

 Figures \ref{fig8} and \ref{fig8a}  presents the performance of the GLRT and LMPU test when the nuisance parameter vector, $\thetavec_n$, is {\em{known}}. 
The GLRT in this case is obtained by substituting the  ML estimator  $\hat{\Delta}$ from \eqref{eq9} in \eqref{eq4}, with the true values of $A_m$, $m=1,\ldots,M$.
The GLMPU test in this case is reduced to the LMPU test  from \eqref{eqGLMPU}, where we used $\kappa_{2} = 0$ and tune the value of $\kappa_{1}$   such that the constraints  \eqref{Pfa} and \eqref{unbiased}  hold. It can be seen that both the tests, GLMPU and GLRT, are unbiased,  since their probability of detection is greater than or equal to the probability of false alarm for any $\Delta$. Additionally,  the LMPU test outperforms the  GLRT for any value of  $\Delta$ in this scenario, in the sense of probability of detection for any tested false alarm probability, $P_{FA}(\Delta,\thetavec_{n}) = \alpha = 0.01,0.05,0.1$. In addition, from Fig \ref{fig8a}, it can be seen that  the performance of LMPU test is always better than or equal to the performance of the GLRT, in terms of probability of detection.

In Fig. \ref{fig7} the probability of detection of the GLRT and GLMPU test are presented when the nuisance parameter vector, $\thetavec_n$, is {\em{unknown}}, versus the local parameter, $\Delta$, for  $P_{FA}(\Delta,\thetavec_{n})=0.01,0.05,0.1$. It can be seen that the probability of detection of the GLMPU test is higher than the probability of detection of the  GLRT  in the neighborhood at $\Delta = 0$ for any value of false alarm probability, and that the gap between the tests is larger when the probability of false alarm is smaller. 
Similarly, in Fig. \ref{f11} it can be seen that the performance of GLMPU test is better than GLRT in sense of the probability of detection verses SNR for any tested false alarm probability, $P_{FA}(\Delta,\thetavec_{n}) = \alpha = 0.01,0.05,0.1$. 
In Fig. \ref{fig12} the receiver operating characteristic (ROC) curves of the GLRT and GLMPU test are presented for the case where we have a single sample for detection, i.e. for $N=1$.  It can be seen that the GLRT cannot detect frequency changes  based on a single sample, since the ML estimator in \eqref{eq9} requires $N\geq 2$ samples. On the other hand, the GLMPU test  achieves good detection performance in this case for high SNRs.
The probability of detection of the GLMPU test increases when $\Delta$ increases, since it is easier to distinguish between the null and the alternative hypotheses as $\Delta$ increases. The special case of a single sample is  useful for  real-time applications, for accurate and fast change detection of frequency. 


In order to demonstrate the empirical complexity of the proposed methods for different problem dimensions, the average computation time, “run-time”, was evaluated by running the algorithms using Matlab on an Intel Core(TM) $i7-10TH~GEN$ CPU computer, $2.80$ $GHz$. Fig. \ref{f10} shows that the run-time of the GLMPU test  is much shorter than  the run-time of the
GLRT for any  number of measurements, $N$.
It can be seen that
the run-time increases polynomially with the number of measurements and sensors, $N$ and $M$, as expected from the theoretical discussion on computational
complexity at the end of Subsection \ref{special_case}.

	\newpage
	\section*{Figures}
\begin{figure} [H] 
	\begin{center}
		\includegraphics[width=12cm]{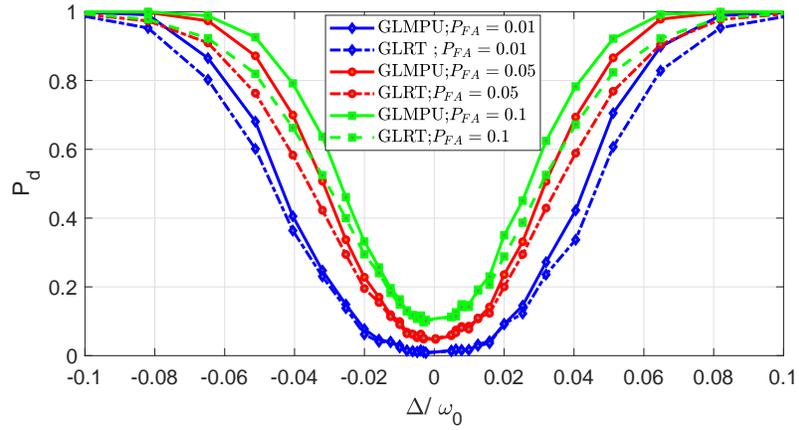}
		\caption{Comparison between GLRT and LMPU test in terms of  probability of detection versus the local parameter $\Delta$, where $\Delta$ is  unknown, $P_{FA}(\Delta,\thetavec_{n})=0.01,0.05,0.1$ and with the parameters  $\omega_0=2\pi*60 [\frac{rad}{sec}]$, $ SNR= 0[dB]$, $N_{\alpha}=60,000$, $M=6$, and $N=48$.}
		\label{fig8}
	\end{center}
\end{figure}
\begin{figure} [H] 
	\begin{center}
		\includegraphics[width=12cm]{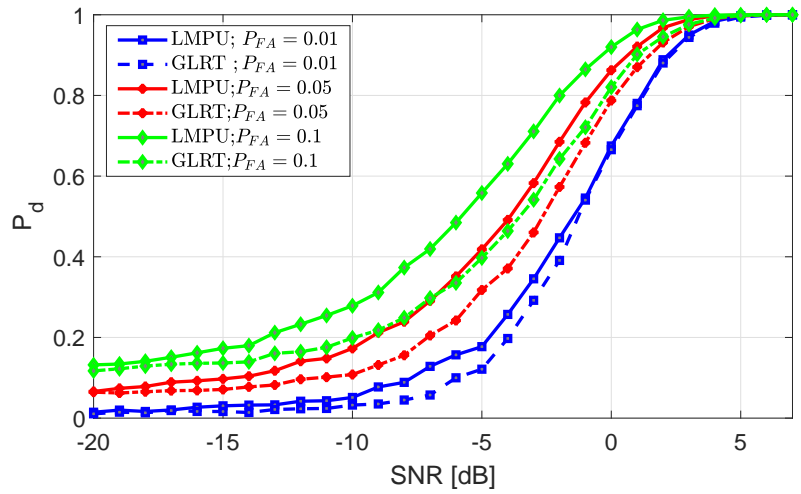}
		\caption{Comparison between GLRT and LMPU test in terms of  probability of detection versus SNR, where $\Delta$ is unknown and the vector $\thetavec_n$ is known. The parameters are setting to be: $\Delta=0.242 \omega_0$, $\omega_0=2\pi*60 [\frac{rad}{sec}]$, $N_{\alpha}=60,000$, $M=6$, and $N=48$.}
		\label{fig8a}
	\end{center}
\end{figure}
\begin{figure} [H] 
	\begin{center}
		\includegraphics[width=12cm]{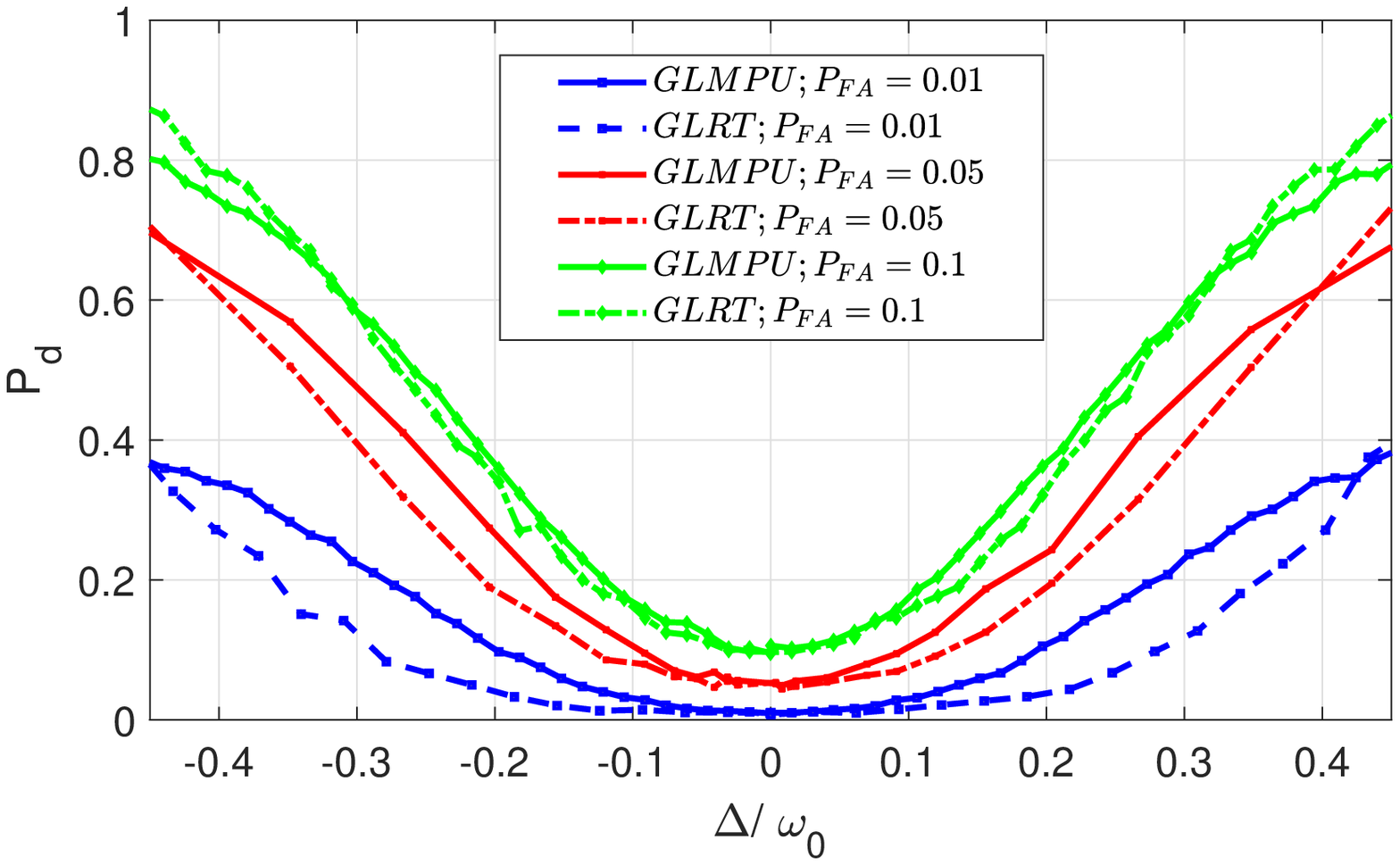}
		\caption{Comparison between GLRT and GLMPU test in terms of  probability of detection versos the local parameter $\Delta$, where $\Delta$ and $\thetavec_n$ are unknown and with the parameters $\omega_0=2\pi*60 [\frac{rad}{sec}]$, $ SNR =0[dB]$, $N_{\alpha}=60,000$, $M=6$, and $N=4$.}
		\label{fig7}
	\end{center}
\end{figure}

\begin{figure} [H] 
	\begin{center}
		\includegraphics[width=12cm]{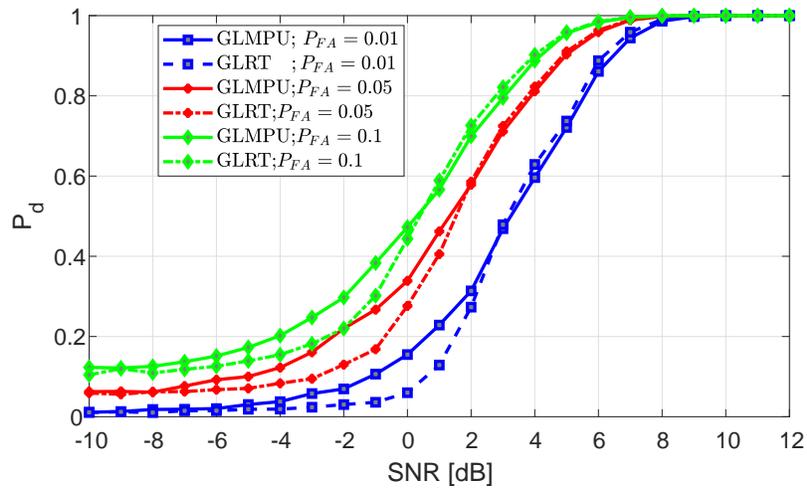}
		\caption{Comparison between GLRT and GLMPU test in terms of  probability of detection versus SNR, where $\Delta$ and $\thetavec_n$ are unknown. The parameters are setting to be: $\Delta=0.242 \omega_0$, $\omega_0=2\pi*60 [\frac{rad}{sec}]$, $N_{\alpha}=60,000$, $M=6$, and $N=4$. }
		\label{f11}
	\end{center}
\end{figure}

\begin{figure} [H] 
	\begin{center}
		\includegraphics[width=12cm]{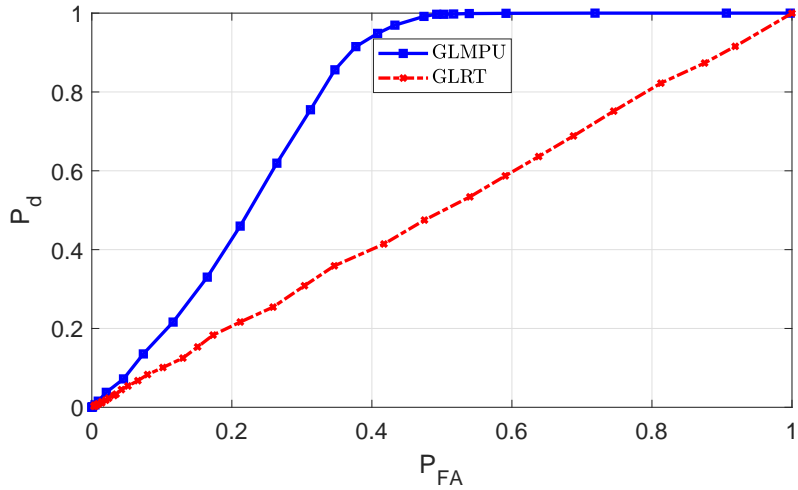}
		\caption{ROC curves of the GLRT and GLMPU test
with the  parameters $M=6$, $N=1$, $\Delta = 0.242 \omega_0$, $ SNR =10[dB]$, and $N_{\alpha}=60,000$, where  $\thetavec_n$ is unknown.}
		\label{fig12}
	\end{center}
\end{figure}


\begin{figure} [H] 
	\begin{center}
		\includegraphics[width=12cm]{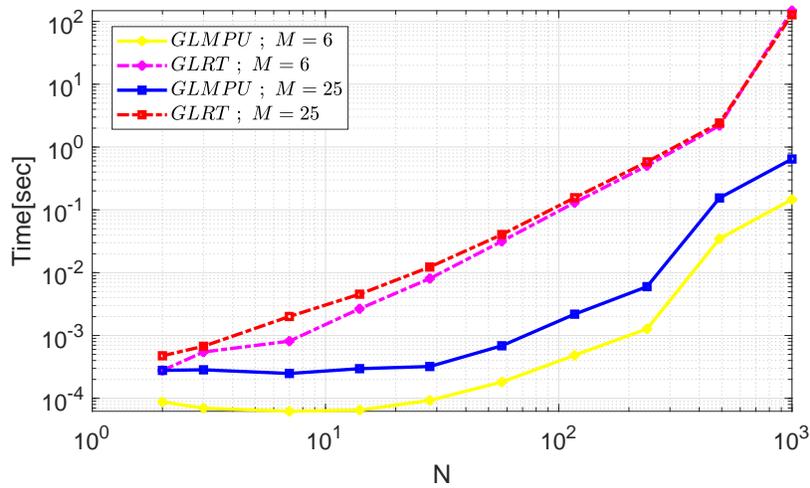}
		\caption{Comparison between GLRT and GLMPU test in terms of run-time versus number of  measurements $N$, where $\thetavec_n$ and $\Delta$ are unknown. Number of search grid $\alpha$ is $N_{\alpha}=60,000$, $M=6$. }
		\label{f10}
	\end{center}
\end{figure}
\section{Conclusion}
\label{conclusion_sec}
In this paper, we 
consider the problem of detecting frequency changes of signals with unknown amplitudes.
We present
a novel detector for the general problem of two-sided hypothesis testing in the presence of unknown nuisance parameters. 
The proposed GLMPU test is designed specifically for the case of close hypotheses w.r.t. a local  parameter-of-interest.
This case
 is of interest in many practical  systems, where detection is performed in low-SNR 
 and/or small sample size scenarios and there are additional nuisance parameters under both hypotheses.
The proposed GLMPU test, as well as the  LMPU test benchmark,  are demonstrated for the special case of frequency-deviation detection of signals with unknown complex amplitudes.
 Simulation results show that the proposed  GLMPU and LMPU tests  exhibit improved performance compared to the GLRT,
  in terms of probability of detection,
for various practical settings.
 In particular, the GLMPU test is robust to a low level of changes and performs well for
  small sample sizes and for a low probability of false alarm regions. In addition, the GLMPU test
has  significantly lower computational complexity than that of the GLRT since it does not require estimation of the frequency. Future research  topics include the asymptotic analysis of the GLMPU test and applications to various detection problems.


\section{Acknowledgments}
This research was partially supported by THE ISRAEL SCIENCE FOUNDATION (grant No. 1173/16) and by the Israeli Ministry of National Infrastructure, Energy and Water Resources.


\appendix
\section{Proof of Theorem 1}
The power function of a general test $\Phi$ from (8) for the hypothesis  testing  in  (6) is  defined by (p. 69 in \cite{Lehmann}):
\begin{align} \label{eqbeta}
    \beta_{\Phi} (\theta_l, \thetavec_n) \define \int_{\Omega_\xvec} \Phi(\xvec) f(\xvec; \theta_l,\thetavec_n)\ud \xvec ~, \theta_l \ne \theta_0.
\end{align}
The pdf of  $\xvec$ under both hypotheses,
    $f(\xvec; \theta_l, \thetavec_{n})$,
    is assumed to be  a continuous and twice differentiable function w.r.t. the local parameter, $\theta_l\in{\mathbb{R}}$, 
    for any  unknown nuisance parameter vector,  
    $\thetavec_n \in \mathbb{C}^K$.  Therefore, the power function $\beta_{\Phi} (\theta_l, \thetavec_n)$ is also a continuous function w.r.t. the local parameter, $\theta_l\in{\mathbb{R}}$, especially at  $\theta_l= \theta_0$. 

     A level-$\alpha$, unbiased test, $\Phi$, is said to be the LMPU  test
    (p. 340 in \cite{Lehmann})
 if,
for any other given level-$\alpha$, unbiased test $\tilde{\Phi}$, there exists $\delta$ 
such that 
\be\beta_{\Phi} (\theta_l, \thetavec_n)  
\geq \beta_{\tilde{\Phi}} (\theta_l, \thetavec_n) ~,\forall \theta_l\in \Omega_\delta,
\ee
  where  $\Omega_\delta$ is defined in (7).
    Thus,  the LMPU test is obtained by maximizing the power function, $\beta_{\Phi}(\theta_l, \thetavec_n) $,
  under the $\alpha$-size and unbiasedness constraints from \eqref{Pfa} and \eqref{unbiased}, respectively, in the neighborhood $\Omega_\delta$. The constraints from \eqref{Pfa} and  \eqref{unbiased}  can be rewritten by using \eqref{eqbeta} as follows:
  \begin{align}\label{eqFAn}
       P_{FA}(\theta_{0},\thetavec_{n})&= \beta_{\Phi} (\theta_l, \thetavec_n)|_{\theta_l=\theta_0} =\alpha,
 \\\label{equ1}
         P_{D}(\theta_l,\thetavec_{n}) &=\beta_{\Phi} (\theta_l, \thetavec_n)|_{\theta_l \ne \theta_0} \geq \alpha,~\forall \theta_l\in \Omega_\delta.
  \end{align}
 Together, these constraints indicate that  $\beta_{\Phi}(\theta_l,\thetavec_{n})$ has a  minimum point at $\theta_l=\theta_{0}$ on the set $\Omega_\delta$.
    Since we assume that the common pdf, $f(\xvec; \theta_l, \thetavec_{n})$, is twice differentiable in the local neighborhood of $\theta_l$ for any $\thetavec_n\in{\mathbb{C}}^K$,
     the constraint in  \eqref{equ1}  can be replaced by   the stationary condition  \beqna
   \label{op3}
   \left.\frac{\partial \beta_{\Phi}(\theta_l,\thetavec_{n})}{\partial \theta_l}\right|_{\theta_l=\theta_0}=0,
      \eeqna together with the condition
      \beqna
    \label{op3_2}
   \left.\frac{\partial^2 \beta_{\Phi}(\theta_l,\thetavec_{n})}{\partial \theta_l ^2}\right|_{\theta_l=\theta_0}>0.
   \eeqna
 Therefore, by concluding the constraints in \eqref{eqFAn}, \eqref{op3}, and \eqref{op3_2}, the LMPU test can be obtained from the solution of the following constrained optimization problem:
     \begin{align} \label{OP1}
    \max_{\Phi(\xvec)} \beta_{\Phi}(\theta_l,\thetavec_{n}) ~
    s.t.~ \begin{cases}
   \beta_{\Phi}(\theta_{0},\thetavec_{n})= \alpha \\
    \left.\frac{\partial \beta_{\Phi}(\theta_l,\thetavecsmall_{n})}{\partial \theta_l}\right|_{\theta_l=\theta_0}=0
      \\
   \left.\frac{\partial^2 \beta_{\Phi}(\theta_l,\thetavecsmall_{n})}{\partial \theta_l ^2}\right|_{\theta_l=\theta_0}>0
, ~ \forall \theta_l \in \Omega_\delta
    \end{cases}.
    \end{align}
    Under the assumptions of differentiability,
   the Taylor series expansion  of the power function, $\beta_{\Phi}(\theta_l,\thetavec_{n})$, around  $\theta_l=\theta_{0}$ is given by:
    \begin{align} \label{OP2}
    \beta_{\Phi}(\theta_l,\thetavec_{n})
    &= \beta_{\Phi}(\theta_{0},\thetavec_{n}) + (\theta_l - \theta_{0}) \left.\frac{\partial 	 \beta_{\Phi}(\theta_l,\thetavecsmall_{n})}{\partial \theta_l}\right|_{\theta_l = \theta_{0}}
  +
     \frac{1}{2}(\theta_l - \theta_{0})^{2}\left. \frac{\partial^{2} 	\beta_{\Phi}(\theta_l,\thetavecsmall_{n})}{\partial \theta_l^{2}}\right|_{\theta_l = \theta_{0}} 
     +O(\delta^2)
     \nonumber\\
     &=\alpha+
     (\theta_l - \theta_{0})^{2} 
     \left.\frac{\partial^{2} 	\beta_{\Phi}(\theta_l,\thetavecsmall_{n})}{\partial \theta_l^{2}}\right|_{\theta_l = \theta_{0}} +O(\delta^2),
    \end{align}
    where the last equality is obtained by substituting the constraint on the false alarm probability from
    \eqref{eqFAn} and the unbiasedness constraint from \eqref{op3}.
    Thus,  according to \eqref{OP2}, in order
   to obtain the
highest power, $\beta_{\Phi}(\theta_l,\thetavec_{n})$, for a given $\alpha$ and  $\thetavec_n$, we need to maximize
the second order term, $ \left.\frac{\partial^{2} 	\beta_{\Phi}(\theta_l,\thetavecsmall_{n})}{\partial \theta_l^{2}}\right|_{\theta_l = \theta_{0}}$
 for both $\theta_l > \theta_{0}$ and $\theta_l < \theta_{0}$, and this leads to the LMPU test.
 In addition, under the constraint $\beta_{\Phi}(\theta_{0},\thetavec_{n})= \alpha$, \eqref{OP2} implies that the constraint in \eqref{op3_2} is redundant for the maximization of $\beta_{\Phi}(\theta_l,\thetavec_{n})$ (which is always equal to or larger than $\alpha$).
 Thus, the maximization in \eqref{OP1} is equivalent to the following optimization:
     \begin{align} \label{OP1_2}
    \max_{\Phi(\xvec)} \left.\frac{\partial^2 \beta_{\Phi}(\theta_l,\thetavec_{n})}{\partial \theta_l ^2}\right|_{\theta_l=\theta_0} ~
    s.t.~ \begin{cases}
   \beta_{\Phi}(\theta_{0},\thetavec_{n})= \alpha \\
    \left.\frac{\partial \beta_{\Phi}(\theta_l,\thetavec_{n})}{\partial \theta_l}\right|_{\theta_l=\theta_0}=0
    \end{cases}.
    \end{align}
     
    By using  (\ref{eqbeta}), it can be verified that
    \be
    \label{first_der}
    \left.\frac{\partial 	 \beta_{\Phi}(\theta_l,\thetavec_{n})}{\partial \theta_l}\right|_{\theta_l = \theta_{0}}  =\left. \frac{\partial}{\partial \theta_l} \left(   \int_{\Omega_\xvec} \Phi(x) f(\xvec; \theta_l, \thetavec_{n})d\xvec \right)\right|_{\theta_l = \theta_{0}} .
    \ee
     Under the assumption that the test, $\Phi(x)$, is independent of the parameter $\theta_l$, 
     the integration and  derivatives in \eqref{first_der} can be reordered 
     to obtain
        \begin{align}\label{eqfirst}
     \left.\frac{\partial 	 \beta_{\Phi}(\theta_l,\thetavec_{n})}{\partial \theta_l}\right|_{\theta_l = \theta_{0}}  =
     \int_{\Omega_\xvec} \Phi(x)   \left( \frac{\partial f(\xvec; \theta_l, \thetavec_{n}) }{\partial \theta_l}  |_{\theta_l = \theta_{0}} \right) \ud \xvec.
    \end{align} 
   Similarly,
       \beqna\label{eqsecond}
    \left.\frac{\partial^{2} 	 \beta_{\Phi}(\theta_l,\thetavec_{n})}{\partial \theta_l^{2}}\right|_{\theta_l = \theta_{0}}  
    =
     \int_{\Omega_\xvec} \Phi(x)   \left( \frac{\partial^{2} f(\xvec; \theta_l, \thetavec_{n}) }{\partial \theta_l^{2}}  |_{\theta_l = \theta_{0}} \right) \ud \xvec.
    \eeqna
   Therefore, by  substituting \eqref{eqfirst} and \eqref{eqsecond} in (\ref{OP1_2}), the integral form of (\ref{OP1_2}) is
    \begin{align} \label{OP}
    &\max_{\Phi(\xvec) } \int_{\Omega_\xvec} \Phi(\xvec) \left( \frac{\partial^2 f(\xvec; \theta_l, \thetavec_{n})}{\partial \theta_l^2}|_{\theta_l = \theta_{0}} \right) \ud \xvec  \\\nonumber
    s.t.~& \begin{cases}
    \int_{\Omega_\xvec} \Phi(\xvec) f(\xvec; \theta_{0}, \thetavec_{n}) \ud \xvec = \alpha \\\nonumber
    \int_{\Omega_\xvec} \Phi(\xvec) \left( \frac{\partial f(\xvec; \theta_l, \thetavecsmall_{n})}{\partial \theta_l}|_{\theta_l = \theta_{0}} \right) \ud\xvec = 0
    \end{cases}.
    \end{align}
   
  By using the  auxiliary lemma of the 
    Generalized Neyman-Pearson lemma (see  p. 77 in  \cite{Lehmann})
with  $m=2$, $f_{1} = f(\xvec; \theta_0, \thetavec_{n}) $, $f_{2} =\frac{\partial f(\xvec; \theta_l, \thetavecsmall_{n})}{\partial \theta_l}|_{\theta_l = \theta_{0}}$,$f_{3} =\frac{\partial^2 f(\xvec; \theta_l, \thetavecsmall_{n})}{\partial \theta_l^2}|_{\theta_l = \theta_{0}}$ , $c_1=\alpha,~c_2=0$, the LMPU test which solved (\ref{OP})  rejects the null hypothesis when
    \begin{align}\label{soul}
        \left.\frac{\partial^2 f(\xvec; \theta_l, \thetavec_{n})}{\partial \theta_l^2}\right|_{\theta_l = \theta_{0}}> k_{1} \left.\frac{\partial f(\xvec; \theta_l, \thetavec_{n})}{\partial \theta_l}\right|_{\theta_l = \theta_{0}} + k_2 f(\xvec; \theta_0, \thetavec_{n}).
    \end{align}
  It can be verified that
    \begin{align}\label{eqf1}
        \left.\frac{\partial f(\xvec; \theta_l, \thetavec_{n})}{\partial \theta_l}\right|_{\theta_l = \theta_{0}} = \left.\frac{\partial \log  f(\xvec; \theta_l, \thetavec_{n})}{\partial \theta_l}\right|_{\theta_l = \theta_0} f(\xvec; \theta_l, \thetavec_{n})|_{\theta_l = \theta_0}
    \end{align}
    and
    \begin{align}\label{eqf2}
        \left.\frac{\partial^2 f(\xvec; \theta_l, \thetavec_{n})}{\partial \theta_l^2}\right|_{\theta_l = \theta_{0}}  = \left(  \frac{\partial^2 \log f(\xvec; \theta_l, \thetavec_{n})}{\partial \theta_{l}^2}  \right)|_{\theta_l = \theta_0}  f(\xvec; \theta_l, \thetavec_{n})|_{\theta_l = \theta_0} +  \frac{\partial f(\xvec; \theta_l, \thetavec_{n})}{\partial \theta_l}|_{\theta_l = \theta_{0}} \frac{\partial \log f(\xvec; \theta_l, \thetavec_{n})}{\partial \theta_l}|_{\theta_l = \theta_0}.
    \end{align}
   By substituting \eqref{eqf1} into \eqref{eqf2}, one obtains
    \begin{align}\label{eqf3}
       \frac{\partial^2 f(\xvec; \theta_l, \thetavecsmall_{n})}{\partial \theta_l^2}|_{\theta_l = \theta_{0}} = \left(  \frac{\partial^2 \log f(\xvec; \theta_l, \thetavec_{n})}{\partial \theta_{l}^2}  \right)|_{\theta_l = \theta_0} f(\xvec; \theta_l, \thetavec_{n})|_{\theta_l = \theta_0} + \left( \frac{\partial \log f(\xvec; \theta_l, \thetavec_{n})}{\partial \theta_l}|_{\theta_l = \theta_0} \right)^{2} f(\xvec; \theta_l, \thetavec_{n})|_{\theta_l = \theta_0}.
    \end{align}
    Then, by substituting \eqref{eqf1} and \eqref{eqf3} in \eqref{soul}, we get that the LMPU test which solved (\ref{OP}) is the test in \eqref{eqlmpu}.
    
	\bibliographystyle{IEEEtran}
 
\end{document}